\title{Two-hop Communication with Energy Harvesting}
\author{\IEEEauthorblockN{Deniz G\"{u}nd\"{u}z and Bertrand Devillers}
\IEEEauthorblockA{Centre Tecnol\`ogic de Telecomunicacions de Catalunya (CTTC)\\ 08860 - Castelldefels, Barcelona, Spain.}
\thanks{This work is supported in part by EXALTED project (IT-258512) funded by European Union's Seventh Framework Programme (FP7), and by the Spanish Government under project TEC2010-17816 (JUNTOS).}
}
\date{}
\newtheorem{thm}{Theorem}[section]
\newtheorem{lem}[thm]{Lemma}
\begin{document}
\maketitle
\thispagestyle{empty}
\pagestyle{empty}

\begin{abstract}
Communication nodes with the ability to harvest energy from the environment have the potential to operate beyond the timeframe limited by the finite capacity of their batteries; and accordingly, to extend the overall network lifetime. However, the optimization of the communication system in the presence of energy harvesting devices requires a new paradigm in terms of power allocation since the energy becomes available over time. In this paper, we consider the problem of two-hop relaying in the presence of energy harvesting nodes. We identify the optimal offline transmission scheme for energy harvesting source and relay when the relay operates in the full-duplex mode. In the case of a half-duplex relay, we provide the optimal transmission scheme when the source has a single energy packet.
\end{abstract}



\section{Introduction}
\label{sec:intro}

Recent advances in energy harvesting technologies enable self-sustaining wireless nodes that are not limited by the lifetime of a battery. However, identifying the optimal operation of the network with energy harvesting nodes is not a trivial problem and is significantly different from the network lifetime maximization problem in the case of battery limited nodes. The optimal operation of the nodes depends highly on the energy harvesting profile while satisfying the energy-neutral operation \cite{Kansal:ACM:06}, which limits the amount of energy that can be used up to time $t$ by the total harvested energy until that time. This problem has received a lot of interest recently, and several works concerning the optimization of different network utility functions with various assumptions on the available information about the energy profiles of the nodes have appeared in the literature \cite{Kansal:ACM:06, Yang:TC:10, Tutuncuoglu:WC:10, Liu:INFOCOM:10, Castiglione:WIOPT:11, Devillers:JCN:11}.

The problem of minimizing the transmission time for a given amount of data over a point-to-point link with energy harvesting is introduced in \cite{Yang:TC:10}. This problem is the dual to the problem of maximizing the total amount of data that can be transmitted within a deadline constraint, which in turn is closely related to the problem of scheduling packets to minimize the total transmission energy studied in \cite{Uysal:TN:02, Zafer:TN:09}. Optimal transmission in an energy-harvesting system is considered in \cite{Tutuncuoglu:WC:10} when the transmitter has a limited battery size. In \cite{Devillers:JCN:11} a general framework including continuous energy harvesting as well as various battery limitations is provided to optimize the transmission scheme.

This paper studies two-hop transmission in the case of energy harvesting nodes. We focus on a two-hop network composed of an energy-harvesting source, an energy-harvesting relay and a destination. The energy harvesting process at each node is modeled as a packet arrival process, such that each energy packet of a random amount arrives at a random time instant. Our focus will be on offline algorithms, that is, the instants and the amounts of random energy packet arrivals are assumed to be known. Given a deadline constraint of $T$, our goal is to maximize the total amount of data that can be transmitted to the destination within time $T$. The problem, and the difficulty of the solution differs significantly for a full-duplex and half-duplex relay. While the instantaneous transmission power of each terminal over time is to be optimized in both cases; in the case of a half-duplex relay, the transmission schedule between the source and the relay needs to be optimized as well.

%

\begin{figure}
\centering
\psfrag{S}{$S$}\psfrag{R}{$R$}\psfrag{D}{$D$}
\psfrag{Es}{$E^s_i$}\psfrag{Er}{$E^r_i$}
\includegraphics[width=3.0in]{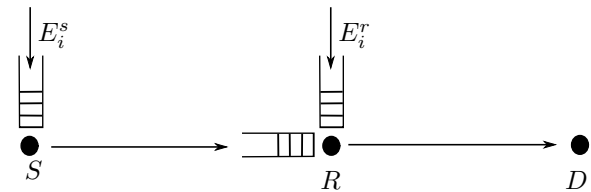}
\caption{Illustration of the two-hop system model with energy-harvesting source and relay terminals.} \label{f:model}
\end{figure}

\section{System Model}\label{s:model}

We consider a two-hop network consisting of a source, a relay and a destination (see Fig. \ref{f:model}). We study the transmission over a time period of $T$. Within time $T$ both the source and the relay harvest energy at known instants of known amounts. Let $0 = t_1 \leq t_2 \leq \cdots \leq t_N$ denote the energy harvesting instants at which the source harvests energy of amounts $E^s_1, \ldots, E^s_N$, respectively, while the relay harvests energy of amounts $E^{r}_1, \ldots, E^{r}_N$, respectively. While we assume identical energy harvesting instants for both nodes; this model is general enough as the time instants at which only the source harvests energy can be modeled by setting $E^{r}_n = 0$, or vice versa. We assume no constraint on the battery sizes. Note here that, different from the previous models we have an energy-harvesting receiver: the relay. While it is interesting to consider the effect of receive energy in this model, our focus in this work is only on the energy used for transmission.


Both the source and the relay can adapt their transmission power and rates instantaneously. We assume additive white Gaussian noise (AWGN) channels from the source to the relay, and from the relay to the destination. Let $h_{s}$ denote the channel coefficient from the source to the relay, while $h_{r}$ denote the channel coefficient from the relay to the destination. The noises at the receivers are assumed to be unit variance and independent of each other. The instantaneous transmission rate from the source to the relay is given by $r^s(P^s(t)) \triangleq \frac12 \log(1+h_{s}P^s(t))$ where $P^s(t)$ is the source power allocated for transmission to the relay at time $t$. Similarly, the rate function from the relay to the destination is given by $r^r(P^r(t)) \triangleq \frac12 \log(1+h_{r}P^r(t))$ where $P^r(t)$ is the power of the relay at time $t$.

The set of feasible power allocation functions at the source and the relay is defined as follows.
\noindent
\begin{small}
\begin{align} \label{d:feasible}
  \mathcal{P} = \bigg \{& (P^s(t), P^r(t)): 0 \leq P^s(t), P^r(t), \forall~t\in[0,T], \nonumber \\
  & \int_{0}^t P^s(t) dt \leq \sum_{k=1}^{n-1} E^s_k \mbox{ and } \nonumber \\
  & \int_{0}^t P^r(t) dt \leq \sum_{k=1}^{n-1} E^{r}_k \mbox{ for all } t\in [t_{n-1}, t_n), \nonumber \\
  & \left. \int_{0}^t r^r(P^r(\tau)) d\tau \leq \int_{0}^t r^s(P^s(\tau)) d\tau~~ \forall t\in [0,T] \right\}.
\end{align}
\end{small}
\noindent While the first two conditions are due to the causality of energy, the last condition is due to the causality in the bits transmitted by the relay; that is, bits can only be transmitted from the relay after they have arrived from the source.

%

The goal is to characterize the maximum number of bits $B$ that can be transmitted to the destination at the end of time $T$ for given energy arrival profiles at the source and the relay. For the solution of this two-hop problem, we often refer to the optimal transmission policy in a single-hop system. As pointed out in \cite{Devillers:JCN:11}, the optimal transmission policy for the single-hop setup, which we call the \textit{Max-Bit algorithm}, can be derived similarly to the dual results in \cite{Uysal:TN:02, Zafer:TN:09}.



\section{Full-duplex Relay}\label{s:full}

We first consider the case with a full-duplex relay. The optimization problem can be written as follows.
\begin{align}
\max~~ \int_0^T r^r(P^r(t)) dt \nonumber \\
\mbox{s.t. } (P^s(t), P^r(t)) \in \mathcal{P}. \nonumber
\end{align}

%


The energy profile at the source limits the bits that can be transmitted to the relay over time. On the other hand, the bits transmitted by the source form a bit arrival profile at the relay. The relay tries to forward as many of its received bits as possible to the destination within time $[0,T]$, satisfying the causality constraints for both the bits and the energy packets. Note that, at the source terminal, the objective function to be optimized is not the total number of bits that can be transmitted by time $T$ as in the single-hop problem in \cite{Yang:TC:10} and \cite{Devillers:JCN:11}. For example, when the number of bits the source transmits to the relay is not the bottleneck, potentially, it is possible for the source to transmit less bits in total, while achieving a different, more advantageous bit arrival profile at the relay, such that the relay can forward more of these bits to the destination. In the following lemma, we show that this is not the case.

\begin{lem}
For given energy arrivals $E^s_{1}, \ldots, E^s_{N}$ at the source, the optimal source transmission scheme is the one that maximizes the total number of bits transmitted to the relay by time $T$. Hence, the optimal transmission scheme is given by the Max-Bit algorithm.
\end{lem}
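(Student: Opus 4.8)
The plan is to work entirely with cumulative curves. For any feasible pair $(P^s,P^r)$ write $D^s(t)=\int_0^t r^s(P^s(\tau))\,d\tau$ for the cumulative bits delivered to the relay and $F(t)=\int_0^t r^r(P^r(\tau))\,d\tau$ for the cumulative bits forwarded to the destination, so that the objective is $F(T)$. The only way the source schedule enters the relay's problem is through the last (data-causality) constraint defining $\mathcal{P}$, namely $F(t)\le D^s(t)$ for all $t$; the relay's own energy causality $\int_0^t P^r\le\sum_{k:t_k\le t}E^r_k$ does not involve the source at all. Let $P^{s*}$ be the Max-Bit source schedule and $D^{s*}$ its cumulative curve. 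By definition of Max-Bit we have $D^{s*}(T)\ge D^s(T)\ge F(T)$ for every feasible policy, so under Max-Bit the relay is handed the largest possible \emph{total} number of bits.

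First I would record why the naive argument fails, since this is exactly the concern raised before the statement. One would like $D^{s*}(t)\ge D^s(t)$ for all $t$, so that any relay response feasible against $D^s$ is automatically feasible against $D^{s*}$; but this pointwise domination is \emph{false}. A source that front-loads its energy delivers bits to the relay faster early on and can have $D^s(t)>D^{s*}(t)$ for small $t$, at the price of a smaller terminal value $D^s(T)$ (strict concavity of $r^s$ is what makes spreading energy out terminal-optimal). Hence the proof must show that the relay never actually \emph{needs} these earlier bits.

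The approach I would take is a deferral (truncation) argument. Given an optimal $(P^s,P^r)$ with relay curve $F$, define a new relay forwarding curve against the Max-Bit source by $\hat F(t)=\min\{F(t),\,D^{s*}(t)\}$. Then $\hat F$ is nondecreasing, it satisfies the data-causality constraint under Max-Bit by construction, and its terminal value is preserved, $\hat F(T)=\min\{F(T),D^{s*}(T)\}=F(T)$, precisely because $F(T)\le D^{s*}(T)$. Thus $\hat F$ forwards exactly as many bits as the original policy, so if I can realize $\hat F$ with a relay power profile that is energy-causal, then the Max-Bit source attains the original objective and the lemma follows.

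The hard part is verifying relay energy causality for $\hat F$, i.e.\ establishing $\int_0^t (r^r)^{-1}(\dot{\hat F})\,d\tau\le\int_0^t (r^r)^{-1}(\dot F)\,d\tau$ for all $t$, where $(r^r)^{-1}$ is the convex power-versus-rate function; combined with the feasibility of $F$ this yields the claim. On every interval where $F\le D^{s*}$ one has $\hat F=F$ and there is nothing to check. On each ``catch-up'' interval $[a,b]$ where $F>D^{s*}$, the two curves agree at the endpoints, $F(a)=D^{s*}(a)$ and $F(b)=D^{s*}(b)$, so $\hat F=D^{s*}$ carries the relay between the same two bit levels over the same window, only with a deferred rate allocation. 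Here I would invoke the known structural property of Max-Bit — that its transmission rate is nondecreasing in time, making $D^{s*}$ the most spread-out feasible climb between two levels — together with the convexity of $(r^r)^{-1}$ and a Jensen/exchange-type comparison, to argue that the deferred allocation consumes no more cumulative energy than $F$ on $[a,b]$. Carrying out this energy comparison cleanly (and, if the plain truncation turns out to be delicate, choosing the deferred rates more carefully than $\min\{F,D^{s*}\}$) is the main obstacle; once it is in hand, $\hat F$ is feasible and Max-Bit is optimal.
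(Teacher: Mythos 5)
Your setup, your diagnosis of the real difficulty (pointwise domination $D^{s*}(t)\ge D^s(t)$ is false), the truncation $\hat F=\min\{F,D^{s*}\}$, and the reduction to a prefix-energy comparison on catch-up intervals are all sound. The gap is precisely in the step you defer, and the tools you propose for it point in the wrong direction. On a catch-up interval $[a,b]$ the transmitter whose energy is being measured is the \emph{relay}, which must follow $D^{s*}$; Jensen with the convex function $(r^r)^{-1}$ therefore \emph{penalizes} the deferred curve, it does not help it. Concretely, if the Max-Bit rate increased at some instant $c\in(a,b)$ (say slopes $s_1<s_2$), while $F$ climbed between the same two endpoint levels at constant slope $m$, then convexity gives $(c-a)(r^r)^{-1}(s_1)+(b-c)(r^r)^{-1}(s_2)\ge (b-a)(r^r)^{-1}(m)$, i.e.\ realizing $\hat F=D^{s*}$ costs \emph{more} relay energy than $F$ on that interval, and the truncated policy can violate relay energy causality exactly where you need feasibility. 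So ``Max-Bit has nondecreasing rate, hence is the most spread-out climb'' cannot close the argument; spreading out is a virtue for the source's energy accounting, not the relay's.

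What rescues the construction is a structural fact you never invoke, and it is the very fact the paper's proof is built on: at each instant $\hat t_i$ where the Max-Bit power steps up, the source energy constraint is tight and the Max-Bit prefix is itself bits-optimal for deadline $\hat t_i$, so \emph{every} source-feasible curve satisfies $D^s(\hat t_i)\le D^{s*}(\hat t_i)$. Since $F\le D^s$, no such $\hat t_i$ can lie in the interior of a catch-up interval (there $F>D^{s*}$ would force $D^s(\hat t_i)>D^{s*}(\hat t_i)$). Hence $D^{s*}$ is \emph{linear} on each catch-up interval, with slope $m=\bigl(F(b)-F(a)\bigr)/(b-a)$, and only then does Jensen work for you: $\int_a^t (r^r)^{-1}(\dot F)\,d\tau\ge (t-a)\,(r^r)^{-1}\bigl(\tfrac{F(t)-F(a)}{t-a}\bigr)\ge (t-a)\,(r^r)^{-1}(m)$, which is the prefix energy of $\hat F$. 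With this ingredient inserted, your proof goes through, and it is then a genuinely different route from the paper's: the paper fixes the competitor's source curve, advances its bits to the beginnings of the intervals $[\hat t_i,\hat t_{i+1})$, and compares the relay's optimal responses, whereas you keep the competitor's relay response and reshape it to fit under $D^{s*}$. Both arguments pivot on the same domination-at-power-increase-instants property; your version, once completed, is arguably the more explicit of the two, since it produces a concrete feasible relay policy achieving the same objective value.
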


\begin{proof}
Let $\bar{P}^s(t)$ be the power allocation scheme given by Max-Bit algorithm that maximizes the number of bits transmitted to the relay at time $T$. Let $\bar{E}^s(t)$ and $\bar{B}^s(t)$ be the corresponding cumulative transmitted energy and bit profiles, respectively. That is, we have $\bar{E}^s(t) = \int_0^t \bar{P}^s(\tau) d\tau$ and $\bar{B}^s(t) = \int_0^t r^s(\bar{P}^s(\tau)) d\tau$. We know that $\bar{P}^s(t)$ is a piecewise-linear, non-decreasing function that can increase only at energy arrival instants \cite{Yang:TC:10}. We want to show that any other power allocation profile $\hat{P}^s(t)$ leads to a less advantageous bit arrival profile at the relay.

Note that $\bar{B}^s(t)$ is a piecewise-linear increasing function, changing its slope at time instants, say, $\hat{t}_1=0, \ldots, \hat{t}_{c-1}, \hat{t}_c$. In order to maximize the number of bits the relay forwards to the destination for this bit arrival process under its energy arrival profile, we adapt the algorithm given in \cite{Yang:TC:10} for the scenario when both the data and the energy arrives in packets. In the optimal transmission scheme given by this algorithm, the relay transmits at the highest possible transmission power its energy and bit arrival profiles allow.

Let the transmission scheme $\hat{P}^s(t)$ at the source have a transmitted bit profile $\hat{B}^s(t)$. Consider the bit arrival profile $\check{B}^s(t)$ such that $\check{B}^s(t) = \hat{B}^s(\hat{t}_i)$ for $\hat{t}_{i-1} \leq t < \hat{t}_i$, $i=2, \ldots, c$. We basically let the bits within each interval $[\hat{t}_i, \hat{t}_{i+1})$ arrive at the beginning of the interval. This can only increase the number of bits the relay can forward. Since $\check{B}^s(t)$ is a piecewise linear step function, the algorithm in \cite{Yang:TC:10} will give the optimal relay transmission scheme corresponding to it.

We have $\check{B}^s(t) \leq \bar{B}^s(t)$ at $t = \hat{t}_1, \ldots, \hat{t}_c$, since $\bar{P}^s(t)$ transmits the maximum possible number of bits at time instants of power increase. Now, if we compare the optimal schemes at the relay for $\bar{B}^s(t)$ and $\check{B}^s(t)$, the former will always have a higher degree of freedom since it allows higher rates at the relay; and hence, forward more bits to the destination.
\end{proof}

The maximum number of bits that can be transmitted to the destination can be determined as follows: the source transmits the maximum number of bits that is allowed by its energy arrival profile ignoring the energy profile of the relay, i.e., we use the Max-Bit algorithm at the source. Then the relay, using the bit arrival profile based on source transmission and its own energy arrival profile, forwards as many of these bits as possible to the destination. The optimal transmission scheme at the relay can be derived similarly to the algorithm described in \cite{Yang:TC:10} for a point-to-point system, although the problems are slightly different: here, we have continuous data arrival, rather than in packets, and the goal is not to minimize the transmission time for given data and energy arrivals, but to maximize the number of bits by a given deadline. We do not go into the details of the algorithm due to lack of space.

\section{Half-duplex Relay}\label{s:half}

In the case of a half-duplex relay, we need to optimize the transmission schedule as well as the transmission power of the terminals, that is, in addition to the feasibility conditions for the power allocation functions given in (\ref{d:feasible}), we also have to satisfy $P^s(t)\cdot P^r(t) =0$ for all $t\in [0,T]$.

The problem is significantly harder in the case of a half-duplex relay. For a fixed transmission schedule between the source and the relay, the source does not necessarily use the power allocation scheme that maximizes the number of transmitted bits to the relay at the end of its transmission period. Consider the following example. Both the source and the relay receive energy packets of $E$ units at time $t=0$. Let $T=5t$ for some $t>0$. The source is scheduled to transmit at intervals $[0,t)$ and $[2t, 4t)$, while the relay is scheduled to transmit at intervals $[t,2t)$ and $[4t, 5t)$. We assume $r^s(p) = r^r(p) = r(p)$. Overall, the source transmits over a time period of $3t$. The transmission scheme that maximizes the bits transmitted to the relay allocates a constant power of $E/3t$, and transmits $t\cdot r\left(\frac{E}{3t}\right)$ bits by time $t$, and $3t\cdot r\left(\frac{E}{3t}\right)$ bits by time $4t$. The relay on the other hand would optimally transmit at a constant power of $E/2t$; however, this would require a total of $t\cdot r\left(\frac{E}{t}\right)$ bits available at the beginning of time $t$, which is not the case. Accordingly, the relay transmits at a reduced power of $E/3t$ over the period $[t,2t)$, and transmits at power $2E/3t$ over the period $[4t,5t)$. Overall, the relay can forward a total of $B_1 = t\cdot r\left(\frac{E}{3t}\right) + t\cdot r\left(\frac{2E}{3t}\right)$. On the other hand, if we let the source transmit at a constant power of $E/2t$ over the periods $[0,t)$ and $[2t, 3t)$ and remain silent rest of the time, it transmits $2t\cdot r\left(\frac{E}{2t}\right)$ bits to the relay by time $T$, which is less than the previous scheme. However, the relay can now transmit at power $E/2t$ for a duration of $2t$, and forwards a total of $B_2 = 2t\cdot r\left(\frac{E}{2t}\right)$ bits. We have $B_1 < B_2$ due to the strict concavity of the rate function.

To simplify the problem, we consider the case in which the source terminal receives a single energy packet of $E$ at time $t=0$. In general, we need to optimize the schedule between the source and the relay transmissions. We show for this special case that, in the optimal transmission schedule, first the source transmits a certain amount of bits using all its energy over a time period of $0<t<T$, which needs to be determined, and then the relay forwards all the received bits to the destination in the remaining time period of $T-t$, using the optimal point-to-point transmission schedule.

\begin{lem}\label{l:time_schedule}
In the optimal transmission schedule, the source transmits first and over a connected time interval, and then the relay forwards in the remaining time.
\end{lem}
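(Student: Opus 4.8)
The plan is to establish the two-phase structure by an interchange (exchange) argument, complemented by a concavity argument for the relay. Throughout, let me write $B^s(t)=\int_0^t r^s(P^s(\tau))\,d\tau$ and $B^r(t)=\int_0^t r^r(P^r(\tau))\,d\tau$ for the cumulative transmitted-bit profiles, and recall the constraints defining $\mathcal{P}$ together with the half-duplex condition: source energy causality $\int_0^t P^s(\tau)\,d\tau \le E$ (which is vacuous here, since all of $E$ is available at $t=0$), relay energy causality $\int_0^t P^r(\tau)\,d\tau \le \sum_{k:t_k\le t} E^r_k$, data causality $B^r(t)\le B^s(t)$, and $P^s(t)P^r(t)=0$.

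First I would show that the source never transmits after the relay has begun. Suppose, toward a contradiction, that an optimal schedule has a relay-active interval lying before a source-active interval. I would pick two small equal-length slices, a relay slice on which the relay uses power $p_r$ and a later source slice on which the source uses power $p_s$, and swap which terminal is active on these two slices while keeping the same powers. I expect to verify that this swap preserves every constraint: moving the source transmission earlier only raises $B^s(t)$ on the intervening interval, which relaxes data causality; moving the relay transmission later only lowers $\int_0^t P^r(\tau)\,d\tau$ and $B^r(t)$ there, which relaxes both relay energy causality and data causality; the source energy constraint is vacuous; and the half-duplex condition is maintained since exactly one terminal is active on each slice. The objective is unchanged, because the relay still delivers the same $r^r(p_r)$ bits, merely at a later time. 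Iterating the swap (or passing to the corresponding measure-theoretic rearrangement) pushes all source activity before all relay activity, so there exists $\theta$ with the source active only in $[0,\theta]$ and the relay active only in $[\theta,T]$.

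Next I would remove idle gaps. Within $[0,\theta]$, if the source is idle on some subinterval, shifting its transmission earlier to close the gap again increases $B^s(t)$ pointwise, relaxing data causality while leaving source energy feasibility and the total delivered bits untouched; hence the source may be taken active on a single interval $[0,t^*)$, where $t^*$ is the total source-active time. This leaves a possibly idle interval $[t^*,\theta]$ before the relay phase, and here I would invoke the strict concavity of $r^r$: since the source has finished by $t^*$, all of its bits are available to the relay throughout $[t^*,T]$, so data causality no longer binds; letting the relay spread its transmission over the longer interval $[t^*,T]$ instead of $[\theta,T]$ transmits the same total energy at lower, more uniform power and therefore delivers at least as many bits, while consuming energy later and so respecting relay energy causality. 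Thus the relay may be taken active on the entire remaining interval $[t^*,T]$, which is precisely the asserted structure.

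The main obstacle I anticipate is making the interchange rigorous in continuous time, in particular formalizing the ``swap two slices'' step as a single rearrangement of the whole schedule and checking that the three causality constraints of $\mathcal{P}$ hold \emph{simultaneously} rather than one at a time. The favorable direction of each constraint, namely source earlier and relay later, is what makes the argument go through; the delicate points are the relay energy-causality bookkeeping around the harvest instants $t_k$ and the concavity step by which the relay absorbs the idle gap.
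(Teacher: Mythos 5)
Your core argument---the slice swap, or equivalently the global measure-preserving rearrangement that pushes all source activity left and all relay activity right---is essentially the paper's own proof: the paper defines $t^*$ as the measure of the source-active set, shifts the nonzero portions of $P^s$ onto $[0,t^*]$ and the nonzero portions of $P^r$ to the right, and verifies exactly the three facts you verify, namely that source energy causality is vacuous since all of $E$ is available at $t=0$, that the relay only postpones its energy consumption, and that data causality is relaxed because the source's cumulative bit curve can only rise while the relay's can only fall. Your first two steps (the interchange plus closing idle gaps inside the source phase) are correct and already suffice for the lemma.

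Your final step, however, overreaches and its justification is wrong. You claim the relay can be taken active on the entire interval $[t^*,T]$ by spreading its energy at ``lower, more uniform power \ldots while consuming energy later.'' Moving relay transmissions from $[\theta,T]$ onto the longer interval $[t^*,T]$ consumes energy \emph{earlier}, not later, and this can violate relay energy causality: if the relay's energy arrives late (e.g., a single packet at some $t_k$ close to $T$), the relay cannot transmit before $t_k$ at all, and no rearrangement makes it active throughout $[t^*,T]$; uniform spreading is simply infeasible there. The lemma does not assert this stronger structure---it only says the relay's forwarding takes place within the remaining time, which your swap argument already delivers. What the relay optimally does inside that window is deliberately left to the paper's next lemma, where it runs its own single-hop Max-Bit policy subject to its energy arrival profile. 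So you should delete the concavity step, or replace it with the observation that the relay solves its single-hop problem on $[t^*,T]$, whose value can only improve as the window lengthens.
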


\begin{proof}
Assume that $P^s(t)$ and $P^r(t)$ are the optimal transmission power assignments for the source and the relay, respectively, such that $P^s(t) \cdot P^r(t)=0$,  $\forall~~t\in [0,T]$. Let
\begin{align}
    t^* \triangleq \int_0^T \mathds{1}(P_s(t)>0)dt,
\end{align}
where $\mathds{1}(x)=1$ if $x$ holds, and $0$ otherwise. Now, define the power allocation function $\hat{P}^s(t)$ as the function that is obtained by shifting the nonzero portions of function $P^s(t)$ to left, that is, $\hat{P}^s(t)$ is nonzero only for $t\in[0,t^*]$. Similarly, we shift the nonzero portions of $P^r(t)$ to right to obtain $\hat{P}^r(t)$, which is nonzero only for $t\in[t^*, T]$. Now, obviously these new functions both satisfy the energy causality constraints since the source has all the energy at time $t=0$, and the relay is only postponing its transmission, and can store the available energy in its battery. On the other hand, the data causality is also satisfied, since the source is transmitting the same amount of data, but earlier than before, which can be stored in the data buffer of the relay. Hence, we do not lose by considering power allocation schemes that divide the time interval $T$ to only two portions, first of which is allocated to the source while the second is allocated to the relay.
\end{proof}

Note that this lemma is not valid when the source also harvests energy, since it might receive energy packets after time $t^*$. Next lemma characterizes the optimal source and relay transmission schemes for given transmission schedule $t^*$.

\begin{lem}
Let $t^*$ be the length of the source transmission period in the optimal transmission schedule. In this period, the source transmits at a constant power of $E/t^*$, while the relay uses the optimal transmission strategy over time $[t^*,T]$ that maximizes the number of bits it transmits until time $T$, while at time $t^*$ it has energy $\sum_{m=1}^{m^*} E^r_m$, where $m^*$ is the maximum $m$ such that $t_m \leq t^*$.
\end{lem}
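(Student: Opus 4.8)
The plan is to exploit the two-interval structure established in Lemma~\ref{l:time_schedule}: the source transmits only on $[0,t^*]$ and the relay only on $[t^*,T]$. The central observation I would make first is that, under this structure, the source's transmission influences the relay's problem only through the single scalar $B^s \triangleq \int_0^{t^*} r^s(P^s(\tau))\,d\tau$, the total number of bits delivered to the relay. Indeed, because the relay is silent on $[0,t^*]$, the data-causality constraint in (\ref{d:feasible}) is vacuous there, and for every $t\in[t^*,T]$ the right-hand side $\int_0^t r^s(P^s(\tau))\,d\tau$ has already saturated to $B^s$. Hence the relay's feasible set, and therefore the number of bits it can forward, depends on the source scheme only through $B^s$; the detailed timing of the source bits within $[0,t^*]$ is irrelevant.

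From this decoupling I would argue the source part. Since the relay's achievable forwarded bits is a non-decreasing function of the available bit budget $B^s$ (a larger $B^s$ only relaxes the terminal data-causality cap), it is without loss of optimality to choose the source scheme that maximizes $B^s$ over $[0,t^*]$ subject to the single energy constraint $\int_0^{t^*}P^s(\tau)\,d\tau \le E$. By strict concavity of $r^s$ together with Jensen's inequality, this maximum is attained by the constant allocation $P^s(t)=E/t^*$, which also exhausts all the available energy by monotonicity of $r^s$. This yields the claimed source scheme.

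For the relay part, I would then fix $B^s = t^*\,r^s(E/t^*)$ and analyze the residual problem on $[t^*,T]$. The relay enters this interval holding exactly the energy it has harvested up to $t^*$, namely $\sum_{m=1}^{m^*}E^r_m$ with $m^*$ the largest index satisfying $t_m\le t^*$, and it continues to receive the later packets $E^r_{m^*+1},\dots$ at their arrival instants. Its objective is to maximize $\int_{t^*}^{T} r^r(P^r(\tau))\,d\tau$ subject to energy causality and the single cap of $B^s$ total bits. This is precisely a point-to-point maximize-bits-by-deadline problem, whose solution is the point-to-point (Max-Bit--type) algorithm invoked earlier, delivering $\min(B^s,\,\text{energy-limited bits})$. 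This establishes the stated relay scheme.

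The step I expect to be the main obstacle is making the decoupling claim fully rigorous --- in particular, verifying that maximizing $B^s$ is genuinely without loss (monotonicity of the relay's optimum in $B^s$) and that the precise within-interval timing of the source bits cannot matter once the relay is constrained to start at $t^*$. The remaining ingredients (Jensen for the source, and reducing the relay to a known point-to-point problem) are standard concavity and bookkeeping arguments.
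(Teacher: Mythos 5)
Your proposal is correct and follows essentially the same route as the paper's own (much terser) proof: decouple the two phases, note that the relay's problem depends on the source only through the total bits delivered by $t^*$, maximize those bits via constant power $E/t^*$ (Jensen/concavity), and solve the relay's residual problem with the point-to-point Max-Bit algorithm, the forwarded total being the minimum of the two quantities. The decoupling and monotonicity steps you flag as the ``main obstacle'' are exactly what the paper's proof asserts implicitly without elaboration, so your write-up is in fact a more rigorous rendering of the same argument.
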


\begin{proof}
The number of bits that can be forwarded to the destination in this scheme is the minimum of the number of bits the source can transmit to the relay up to time $t^*$ and the number of bits the relay can forward to the destination in the remaining time. The source can transmit at most $t^*\cdot r^s\left(\frac{E}{t^*}\right)$ bits at constant power of $E/t^*$. The maximum number of bits the relay can forward is found by the Max-Bit algorithm.
\end{proof}


We do not have a closed form solution for the optimal transmission time of the source. Even in the case of a single energy packet at the relay, this requires the solution of a non-linear equation. However, since the number of transmitted bits from the source increases from zero to its maximum with increasing $t$, while the number of bits the relay can forward decreases from its maximum to zero, there is always an optimum transmission time $t^* \in (0,T)$, which can be found numerically, that equates these values. While we can use the bisection method to find the crossing point of the optimal bit vs. transmission time curves of the source and the relay, it is possible to simplify the complexity of the algorithm as follows.


Consider the energy arrival profile of the relay. See Fig.~\ref{f:energy_arr_relay} for an example. On this plot, we denote the points corresponding to the instants just before the energy arrivals by $C_1, C_2, \ldots, C_N$, i.e., $C_i$, $i=2, \ldots, N$ is the point $\left(t_i, \sum_{j=1}^{i-1} E^r_j \right)$. While $C_1$ is the origin, the point corresponding to $(T, \sum_{j=1}^N E^r_j)$ is denoted by $C_{N+1}$. We draw the lines connecting $C_{N+1}$ to $C_N, C_{N-1}, \ldots, C_1$. Of these lines, we take the one that does not intersect the energy harvesting curve at points other than $C_i$'s. Let the intersection of it with the $x$-axis be $\tilde{t}_1$. Among the $C_i$'s that this line intersects, we take the leftmost one, say $C_j$, and draw the lines from $C_j$ to all $C_k$ such that $k<j$, and similarly to the previous step we determine $\tilde{t}_2$, and so on so forth, until we reach $C_1$. Using the arguments for the optimal transmission scheme for the energy harvesting relay, if $t^* \in [\tilde{t}_{i+1}, \tilde{t}_i)$, the optimal relay transmission curve first follows the straight line connecting $t^*$ to $C_j$ which is the origin of the straight line that gave us $\tilde{t}_{i+1}$, and continuing on the straight lines drawn by the algorithm until we reach $C_{N+1}$.

This algorithm gives us the maximum number of bits the relay can forward at each point in time. We can calculate this value for points $\tilde{t}_1, \tilde{t}_2, \ldots$ and compare them with the number of bits the source can provide to the relay up to that time instant, which is given by $(1-\tilde{t}_i)r^s_1\left(\frac{E}{1-\tilde{t}_i}\right)$. We stop, say at $\tilde{t}_i$, when the value for the number of bits the relay can forward is more than the one the source can transmit. The optimal transmission time for the source $t^*$ lies between $\tilde{t}_i$ and $\tilde{t}_{i-1}$. We use the bisection method in this range to find an accurate approximation of $t^*$.

For the sample energy arrival curve given in Fig. \ref{f:energy_arr_relay}, we set $E_1=5$, $E_2=5$, $E_3=6$, $t_2=7$, $t_3=10$ and $T=11$. We also fix the channel coefficients as $h_s = h_r = 1$. In Fig. \ref{f:bit_energy}, we plot the maximum number of bits that can be transmitted to the destination vs. the energy of the source. As illustrated in the figure with different colors, as the source energy increases, the optimal time allocation $t^*$ decreases, and it falls into different time intervals in the relay energy arrival curve. It can also be seen in the figure that the number of forwarded bits is bounded above by the maximum number of bits the relay can forward even if it had an unlimited bit supply.

\begin{figure}
\centering
\psfrag{E1}{$E_1$}\psfrag{E2}{$E_2$}\psfrag{E3}{$E_3$}
\psfrag{c1}{$C_1$}\psfrag{c2}{$C_2$}\psfrag{c3}{$C_3$}\psfrag{c4}{$C_4$}
\psfrag{t1}{$t_1$}\psfrag{t2}{$t_2$}\psfrag{t3}{$t_3$}
\psfrag{tt1}{$\tilde{t}_1$}\psfrag{tt2}{$\tilde{t}_2$}\psfrag{tt3}{$\tilde{t}_3$}
\psfrag{T}{$T$}
\psfrag{energy}{$\mathrm{Energy}$}\psfrag{time}{$\mathrm{Time}$}
\includegraphics[width=3.3in]{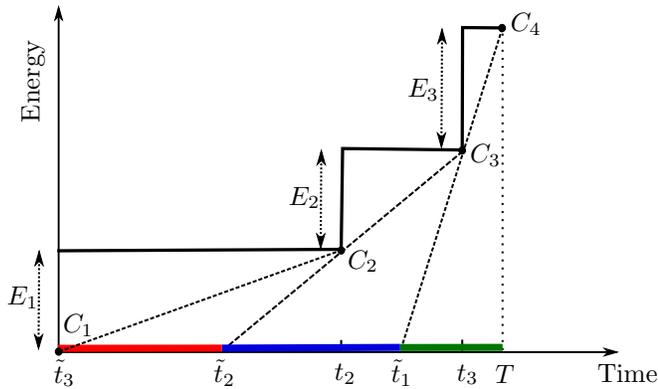}
\caption{The cumulative energy arrival profile at the relay.} \label{f:energy_arr_relay}
\end{figure}

\section{Conclusions and Future Work}\label{s:discuss}

We have introduced the two-hop network model in which the source and the relay terminals are capable of harvesting energy from the environment. We have focused on the offline algorithms which optimize the transmission schemes with the knowledge of the energy harvesting profiles of all the nodes. We have provided the optimal transmission scheme in the case of a full-duplex relay. On the other hand, in the case of a half-duplex relay, characterizing the optimal transmission schedule in the most general case is quite complicated. Instead, we have focused on the special case of a single energy packet at the source terminal. For this simplified scenario, we have identified the characteristics of the optimal transmission schemes at the source and the relay terminals, and we have provided a low complexity numerical algorithm that solves for the optimal transmission time schedule.



An interesting extension of this work that deserves further analysis is the model with battery or data buffer limitations at the relay terminal. Note that, when the battery or the data buffer of the relay terminal is limited in the half-duplex case, Lemma \ref{l:time_schedule} does not hold since either the energy or the data accumulated over the time period when the source is transmitting will overflow. Instead, we should optimize the transmission schedule taking into account the battery or the data overflow at the relay terminal.

\begin{figure}
\centering
\psfrag{B}{$B$}\psfrag{Es}{$E$}
\includegraphics[width=3.3in]{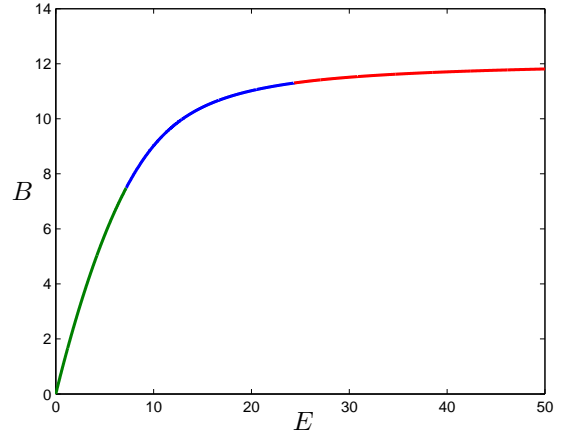}
\caption{The maximum number of transmitted bits, $B$, versus the source energy, $E^s$ for the energy arrival profile given in Fig. \ref{f:energy_arr_relay}.} \label{f:bit_energy}
\end{figure}

\bibliographystyle{IEEEtran}
\bibliography{ref2}

\end{document}